\newcommand{\coll}{\operatornamewithlimits{\textsc{collinearPoints}}}
\newtheorem{thm}{Theorem}[section]
\newtheorem{cor}[thm]{Corollary}
\newtheorem{lem}[thm]{Lemma}
\numberwithin{equation}{section}
\begin{document}
\title{A Practical Algorithm for Enumerating Collinear Points}
\author{Ali Gholami Rudi\textsuperscript{1}}
\address{\textsuperscript{1}Faculty of
Electrical and Computer Engineering, Babol
Noshirvani University of Technology, Babol, Iran}
\email{gholamirudi@nit.ac.ir}
\author{Raimi Ayinde Rufai\textsuperscript{2}}
\address{\textsuperscript{2}SAP Labs, 111 rue Duke, Suite 9000, Montreal QC H3C 2M1, Canada}
\email{raimi.rufai@sap.com}

\keywords{Collinear points, Degeneracy testing, Maximal collinear subsets.}
\maketitle

\begin{abstract}
This paper studies the problem of enumerating all maximal collinear subsets of
size at least three in a given set of $n$ points.
An algorithm for this problem, besides
solving degeneracy testing and the exact fitting problem,
can also help with other problems, such as point line cover and general position subset selection.
The classic \emph{topological sweeping} algorithm of Edelsbrunner and Guibas can find these subsets in
$O(n^2)$ time in the dual plane.  We present an alternative algorithm that, although asymptotically slower than their algorithm in the worst case, is simpler to implement and more amenable to parallelization.
If the input points are decomposed into $m$ convex polygons, our algorithm has time complexity $O(n^2 \log m)$ and space complexity $O(n)$. Our algorithm can be parallelized on the CREW PRAM with time complexity $O(n \log m)$ using $n$ processors.
\end{abstract}

\section{Introduction}
We study the problem of finding all maximal collinear subsets
of size at least three in a given set of points in the plane. In this paper, we assume the real RAM model of computation, where real arithmetic operations and comparison of reals take constant time and the floor function ($\lfloor a \rfloor$) is not allowed \cite{preparata2012computational}.

Our main motivation is that some of the algorithms for problems like point line cover (i.e.,
covering a set of points with the minimum number of lines)
\cite{kratsch16} or general position subset selection (finding the largest
subset of points in general position) \cite{froese16} need
to identify maximal collinear subsets as a first step.
A special case of this problem is the well-known degeneracy
testing problem (i.e.~testing whether any three of the points are
collinear).  The fastest known algorithm for this problem has time
complexity $O(n^2)$ \cite{edelsbrunner89}, which seems best possible, 
based on either the number of sidedness queries \cite{erickson96} or
3-{\small SUM} reduction \cite{gajentaan95}.  Another special case
is the exact fitting problem, which tries to find the line
that covers the most points \cite{guibas96}.
Finding maximal collinear subsets trivially solves these two problems.

As with degeneracy testing, the best known algorithm for finding
collinear subsets uses topological sweeping with time complexity
$O(n^2)$ and space complexity $O(n)$.

In this paper, we present an alternative algorithm based on cyclically
sorting the input points. The algorithm runs in time $O(n^2 \log m)$ and space $O(n)$, if the input points can be decomposed into $m$ convex polygons.  Our algorithm is much easier to implement than those depending on arrangements, and should be almost as fast in practice.  The techniques we use for sorting the points
cyclically may be helpful in other algorithms, such as those for
computing visibility graphs. Another advantage of our algorithm is that
it can be executed in parallel on the CREW PRAM in time $O(n \log m)$ and
space $O(nm)$ using $n$ processors.

The rest of this paper is organized as follows. We summarize related work in Section \ref{sec:related:work} and present the basis of
our algorithm in Section \ref{smain} with time complexity $O(n^2 \log n)$.
In Section \ref{spoly}, we improve its time complexity to $O(n^2 \log m)$,
if input points are decomposed into $m$ convex polygons. We present
our parallel algorithm in Section \ref{sect:parallel:algo} and end with some concluding remarks in Section \ref{sect:conclusion}.

\section{Related Work}
\label{sec:related:work}
The problem of finding collinear subsets can be restated in the dual
plane. Each point in the input is mapped to a line in the dual plane.
A subset of these lines intersect each other at a common point, if their
corresponding points in the original plane are collinear.
Therefore, the
problem of finding maximal sets of collinear points is equivalent
to finding the set of lines that intersect each other at each
intersection in the dual plane.
Intersecting lines can be identified using plane sweeping
with time complexity $O(n^2 \log n)$ \cite{bentley79}.
An alternative for identifying intersecting lines is using an
arrangement of lines, i.e.~a partition of the plane induced by
the set of lines into vertices, edges, and faces.
An arrangement has space complexity $O(n^2)$ and can be constructed
with time complexity $O(n^2)$ \cite{berg08}.

The well-known \emph{topological sweeping} algorithm of Edelsbrunner
and Guibas \cite{edelsbrunner89} can sweep arrangements without
constructing them, thus reducing its space complexity to $O(n)$.
The time complexity matches the lower-bound for the time
complexity of the problem of testing degeneracy and is thus the best
possible.  The algorithm, however, is rather difficult to implement,
especially since input points are not in general position (see for
instance \cite{rafalin02}). Note that reporting line intersections in
the dual plane is not enough for finding collinear
points.  These intersections should also be ordered so that all lines
that intersect can be reported at once efficiently, making
intersection reporting algorithms (for instance \cite{balaban95})
inefficient for this problem.

Because topological sweeping is inherently sequential and thus unsuitable for parallel execution, parallel
algorithms for sweeping or constructing arrangements have appeared in the literature. Anderson et al.~\cite{anderson96} presented a CREW PRAM algorithm for constructing arrangements in $O(n \log^{\ast} n)$ time with $O(n)$ processors and Goodrich \cite{goodrich91} presented an algorithm (also on the CREW PRAM) with the same goal
with time complexity $O(n \log n)$ with $O(n)$ processors.
Since these algorithms construct the arrangement, they have
space complexity $O(n^2)$, which is more than the $O(nm)$ space complexity of our parallel algorithm. Our algorithm also beats Goodrich's in time complexity, besides being easier to code since it uses only simple data structures.

Other methods have been presented to partition the arrangement into
smaller regions with fewer points and use the sequential algorithm for
sweeping these regions in parallel (e.g.~\cite{agarwal90} and \cite{hagerup90}).
They are usually based on the assumption that the points are in general position. Such  partitions yield poor performance, when this assumption is violated as it is the case in the problem of finding collinear points.
Similar parallel sweeping methods have been presented for specific applications
(e.g., rectangle intersection \cite{khlopotine13} and hidden surface elimination \cite{goodrich96}),
in which there are fewer events than the number of intersections
in the worst case (for finding collinear subsets $\Theta(n^2)$).
More recently, McKenny et al.~presented a plane sweep algorithm that divides
the plane into vertical slabs perpendicular to the sweep line, which has time
complexity $O(n \log n)$ with $n$ processors for $\Theta(n^2)$ intersections \cite{mckenney17}, whose cost is more than our algorithm.

\section{The Base Algorithm}
\label{smain}
In what follows, we describe our main algorithm for enumerating
maximal collinear subsets of a set of points.  It assumes an arbitrary ordering
$\sigma$ on the input points $P$. It processes each point $p$ (in the order specified by $\sigma$) and finds and reports all maximal sets of points collinear with $p$ that have not already been found in a previous iteration.
\begin{algorithm}
    \caption{$\coll(\sigma)$}
    \label{alg:BuildTree}

    \ForEach{$p$ \textbf{\emph{in}} $\sigma$}{
        Sort all points cyclically around $p$ in counterclockwise direction to obtain the sequence $O_p$. In other words, for each point $q$ in $P$,  let $a_p (q)$ be the counterclockwise angle between the horizontal half-line starting from $p$ rightward and the segment $pq$; $O_p$ is the result of sorting $P$ according to $a_p$. \label{stepi}\
        
        Decompose $O_p$ into sequences $U$ and $D$ while preserving the order of the points, such that $q \in U$ if $a_p (q) < \pi$, and $q \in D$, otherwise.  Let $\bar{a}_p (q) = a_p (q)$ for every point $q \in U$ and $\bar{a}_p (q) = a_p (q) - \pi$ for every point $q \in D$.  \label{stepii} \
        
        Merge $U$ and $D$ by keeping the points ordered by $\bar{a}_p$ to obtain $M_p$. \label{stepii-merge} \
        
        Find maximal consecutive collinear points in $M_p$. This requires a linear scan through $M_p$. \label{stepiii}\
        
        Report the sets found in step \ref{stepiii} that do not contain any point that precedes $p$ in $\sigma$. \label{stepiv}\
        
    } 
\end{algorithm}

It is not difficult to see that except for step \ref{stepi} of the algorithm,
each step has time complexity $O(n)$.  Sorting the points
in step \ref{stepi} can be performed in $O(n \log n)$.
Note that the points in sequences $U$ and $D$ remain sorted
according to both $a_p$ and $\bar{a}_p$ in step \ref{stepii} and
can be merged in $O(n)$.  Since these steps are repeated for
every point, the total time complexity of the algorithm is
$O(n^2 \log n)$ and its space complexity is $O(n)$.
Theorem \ref{tmain} shows its correctness.

\begin{thm}\label{tmain}
The algorithm reports every maximal set of collinear points exactly once.
\end{thm}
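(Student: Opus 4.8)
The plan is to split the argument into two parts: first, to show that when the algorithm processes a point $p$, step \ref{stepiii} recovers exactly the maximal collinear subsets containing $p$; and second, to use the filtering rule of step \ref{stepiv} to argue that each such subset is reported during exactly one iteration.

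The geometric core of the first part is that $\bar{a}_p(q)$ is just the direction of the line through $p$ and $q$, measured modulo $\pi$. I would first verify that $\bar{a}_p(q) = a_p(q) \bmod \pi$ for every $q \neq p$: this is immediate when $q \in U$, and when $q \in D$ it follows from $\bar{a}_p(q) = a_p(q) - \pi$. Consequently, for points $q, r \neq p$ the equality $\bar{a}_p(q) = \bar{a}_p(r)$ holds if and only if $p$, $q$ and $r$ are collinear, since both sides encode the undirected direction of the line through $p$. Because $M_p$ is sorted by $\bar{a}_p$, the points sharing a common value of $\bar{a}_p$ occupy one contiguous block, and such a maximal block of size at least two, together with $p$, is exactly a maximal set of collinear points lying on a single line through $p$; conversely every maximal collinear subset through $p$ arises this way. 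This identifies the output of step \ref{stepiii} with the family of maximal collinear subsets through $p$ of size at least three.

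For the second part, I would fix a maximal collinear subset $S$ with $|S| \ge 3$, lying on a line $\ell$, and let $p^\ast$ denote its $\sigma$-minimal element. By the first part, when a point $p$ is processed, $S$ appears among the sets found in step \ref{stepiii} if and only if $p \in S$: indeed, if $p \notin S$ then $p \notin \ell$, so no block through $p$ can equal $S$. In the iteration for $p^\ast$ the set $S$ is found and, as $p^\ast$ precedes every other element of $S$ in $\sigma$, step \ref{stepiv} reports it. In the iteration for any other $p \in S$ the set $S$ is again found but now contains $p^\ast$, which precedes $p$, so step \ref{stepiv} suppresses it. Combined with the fact that $S$ never appears for $p \notin S$, this shows that $S$ is reported exactly once, namely in the iteration for its $\sigma$-minimal element.

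The step I expect to demand the most care is the geometric claim behind step \ref{stepiii}: making precise that a contiguous equal-$\bar{a}_p$ block corresponds to collinearity with $p$, and that maximality of the block matches maximality of the collinear subset. The degenerate directions are where the decomposition into $U$ and $D$ and the shift by $\pi$ earn their keep, so I would pay close attention to the boundary convention that places $q$ in $U$ exactly when $a_p(q) < \pi$, checking for instance that a point directly right of $p$ (angle $0$) and one directly left of $p$ (angle $\pi$) are correctly merged into a single block.
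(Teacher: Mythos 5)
Your proposal is correct and follows essentially the same route as the paper's proof: collinear points through $p$ become a contiguous equal-$\bar{a}_p$ block in $M_p$, and the filter in step \ref{stepiv} ensures each maximal set is reported only at its $\sigma$-minimal element. You simply spell out in more detail the geometric fact (that $\bar{a}_p$ is the line direction modulo $\pi$) which the paper's proof leaves implicit.
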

\begin{proof}
Let $S$ be a set of maximal
collinear points in $P$ and let $p$ be the first point of $S$ in $\sigma$.
When $P$ is sorted cyclically around $p$,
all points of $S$ appear contiguously either in
$U$ or in $D$, and when these sequences are merged, in $M_p$.
Thus, they are identified in step \ref{stepiii} and reported in step \ref{stepiv},
since $p$ was the first point of $S$ in $\sigma$.
For every other point $q$ of $S$, although the points of $S$
follow each other in $M_q$, they will not be reported in
step \ref{stepiv}, since $p$ is before $q$ in $\sigma$.
Note that the algorithm never outputs non-collinear points,
thanks to step \ref{stepiii}.
\end{proof}

In what follows, we try to improve the time complexity
of step \ref{stepi} (i.e.~sorting the points to obtain $O_p$).  A
simple heuristic, which we shall not pursue here, is to adjust $\sigma$ so that consecutive points
are as close as possible, so that fewer alterations are made
to $O_p$. This would help if in
step \ref{stepi}, the partially sorted sequence for the previous point is
sorted for the current point (some sorting algorithms are much
faster for sorting partially-sorted sequences).  More formally, 
let $p$ and $q$ be two consecutive points in $\sigma$.  A pair of points,
$s$ and $t$, is \emph{inverted} if $s$ appears before $t$ in $O_p$ but
after $t$ in $O_q$.  This can happen only if the line passing
through $s$ and $t$ intersects the segment from $p$ to $q$.
While moving a point $q'$ from $p$ to $q$, it can be shown that
each such intersection swaps two adjacent points
in $O_{q'}$.  Thus, the goal is to minimize the number of such
intersections or inverted pairs.

In the next section we use another method for the same goal:
finding sets of points, whose order does not change substantially in
$O_p$ for different points $p$ in $P$.

\section{Using Convex Polygon Decompositions}
\label{spoly}
We now concentrate on improving the complexity of computing $O_p$
for every point $p$ in $P$.  For that, we try to identify sets of
points in $P$ that appear in the same order in $O_p$ for every $p$
in $P$.
As we show in Lemma \ref{lpoly}, convex polygons have this property
to a large extent.
A \emph{tangent} $l_{pq}$ from a point $p$ to a convex polygon $P$ is a line that passes through $p$
and another point $q$ of $P$ such that all of $P$ lies to one side of $l_{pq}$; $q$ is called the \emph{tangent point} from $p$ to polygon $P$. If the tangent passes through more than one point of $P$, the closest to $p$ is considered the tangent point.  Tangent points can be identified via a linear scan (or binary search) through the points of $P$.

\begin{lem}
\label{lpoly}
For any point $p$ and any convex polygon $P = \left( p_0 , p_1 ,..., p_{k-1} \right)$,
$P$ can be decomposed into at most two sequences such that the points in
any of these sequences are cyclically sorted (either clockwise or counterclockwise) from $p$.
\end{lem}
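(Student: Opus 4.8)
The plan is to cut the boundary of $P$ at the two tangent points from $p$ and to show that between them the angular order $a_p$ is monotone. First I would dispose of the easy case in which $p$ lies inside $P$: traversing the vertices $p_0, p_1, \ldots, p_{k-1}$ in counterclockwise order, the signed area of every triangle $(p, p_i, p_{i+1})$ has the same sign, so $a_p$ turns in a single rotational sense at each step and increases by a total of $2\pi$ as we go once around. This produces a single cyclically sorted sequence, which already meets the bound of the lemma (here only one sequence is needed).

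The substantive case is when $p$ lies outside $P$. Here I would introduce the two tangents $l_{pq_1}$ and $l_{pq_2}$ with tangent points $q_1$ and $q_2$. Because $p$ is external, all of $P$ lies in the cone bounded by these two tangents, a cone of apex angle less than $\pi$; hence every vertex has its $a_p$-value in an angular interval of width less than $\pi$, no wrap-around occurs, and "monotone in rotational sense" coincides with "monotone as real angles." The points $q_1$ and $q_2$ attain the extreme values of $a_p$ over the vertices of $P$, and they split the boundary cycle into two contiguous vertex chains; I would take these two chains as the required sequences.

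The heart of the argument, and the step I expect to be the main obstacle, is to prove that $a_p$ is monotone along each chain, equivalently that $q_1$ and $q_2$ are the only vertices at which the rotational sense of $a_p$ reverses. For three consecutive vertices, the sense in which $a_p$ turns from $p_i$ to $p_{i+1}$ is governed by the sign of the signed area of $(p, p_i, p_{i+1})$, which records on which side of the directed edge $p_i p_{i+1}$ the point $p$ lies. I would then invoke convexity: the set of edges having $p$ on their outer side (the boundary of $P$ visible from $p$) forms a single connected chain, and the complementary set of edges, having $p$ on the interior side, forms another connected chain, the two meeting exactly at $q_1$ and $q_2$. Consequently $p$ lies on the same side of every directed edge within a given chain, the sign above is constant along each chain and flips only at the tangent points, and therefore $a_p$ is monotone, increasing along one chain and decreasing along the other. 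Each chain is thus cyclically sorted, one clockwise and one counterclockwise, as claimed.

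Finally, I would note that the degenerate situation in which $p$ is itself a vertex of $P$ is subsumed by the same reasoning: the remaining vertices lie within the interior angle at $p$, which has width less than $\pi$, so they already form a single monotone chain and the bound of two sequences holds trivially.
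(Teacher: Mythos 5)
Your proposal is correct and follows essentially the same route as the paper's proof: handle the interior case as a single cyclically sorted sequence, and otherwise cut the boundary at the two tangent points from $p$ to obtain two chains. You additionally supply the monotonicity justification (constant sign of the signed area of $(p, p_i, p_{i+1})$ along each chain, by connectedness of the visible portion of the boundary) that the paper states only implicitly, which strengthens rather than changes the argument.
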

\begin{proof}
If $p$ is inside $P$, every point of $P$
appears in the same order when cyclically ordered from $p$; thus,
the whole of $P$ is a sorted sequence.
Otherwise, let $p_i$ and $p_j$ be tangent points from $p$
to polygon $P$.  Then,
the sequences $\left( p_i , p_{i+1} ,..., p_{j-1} \right)$ and
$\left( p_j , p_{j+2} ,..., p_{i-1} \right)$ are cyclically sorted
from $p$ (indices are modulo $k$).
\end{proof}

\begin{thm}
\label{tmerg}
If the set of points $P$ can be decomposed into $m$ convex polygons,
the algorithm introduced in Section \ref{smain} for identifying
collinear points can
be improved to achieve the time complexity of $O(n^2 \log m)$.
\end{thm}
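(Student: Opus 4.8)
The plan is to speed up the only expensive step of Algorithm \ref{alg:BuildTree}, namely step \ref{stepi}, by replacing the from-scratch angular sort of the $n$ points around each $p$ with a multiway merge of a small number of already-sorted sequences; steps \ref{stepii}--\ref{stepiv} are left untouched, since they already run in $O(n)$ per point. I take the decomposition of $P$ into $m$ convex polygons $P_1,\dots,P_m$ as given by the hypothesis. Fix a point $p$. For each polygon $P_t$ I would first locate the (at most two) tangent points from $p$ to $P_t$; by Lemma \ref{lpoly} these cut the vertices of $P_t$ into at most two chains, each of which is cyclically sorted around $p$. Locating the tangent points takes $O(\log|P_t|)$ by binary search (or $O(|P_t|)$ by a linear scan), so over all polygons this is $O(n)$ per point, which will be subdominant.

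Once the chains are in hand, I would put them all into a common form: a chain that Lemma \ref{lpoly} reports as clockwise is reversed so that every chain is sorted in counterclockwise order, i.e.\ by increasing $a_p$. One technical adjustment is needed because $a_p$ is cyclic on $[0,2\pi)$: a chain whose angular span crosses the reference ray (the rightward horizontal half-line at angle $0$) is split there into two pieces, each monotone in $a_p$. This splitting, together with the reversals, costs $O(|P_t|)$ per polygon and adds only $O(1)$ chains per polygon, so I am left with $O(m)$ sequences, each sorted by $a_p$, of total length $n$. The sequence $O_p$ sought in step \ref{stepi} is exactly the merge of these sequences under the key $a_p$.

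I would then perform this merge with a standard $k$-way merge driven by a min-heap (or tournament tree) of size $O(m)$ whose entries are the current fronts of the sequences, keyed by $a_p$. Each of the $n$ output elements costs one extract-min and one insert, each $O(\log m)$, so the merge runs in $O(n\log m)$ time and $O(m)$ extra space. Adding the $O(n)$ spent on tangents and on reversal/splitting, producing $O_p$ costs $O(n\log m)$ per point; feeding this $O_p$ into the unchanged steps \ref{stepii}--\ref{stepiv} costs a further $O(n)$. Summing over all $n$ choices of $p$ yields time $O(n^2\log m)$, while the storage is the $O(n)$ decomposition plus the $O(m)$ heap and the $O(n)$ working sequences, i.e.\ $O(n)$ overall.

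Correctness requires no new argument: the procedure computes the very same sequence $O_p$ that step \ref{stepi} of the base algorithm produces, so Theorem \ref{tmain} applies verbatim and every maximal collinear set is still reported exactly once. The main obstacle I expect is not the merge itself but the cyclic bookkeeping around the angle-$0$ reference ray: I must verify that the ``cyclically sorted'' chains of Lemma \ref{lpoly} become genuinely monotone in the linear key $a_p$ after at most one split per chain, and that orientation (clockwise versus counterclockwise) is normalized uniformly, so that the merge comparator reduces to a plain comparison of $a_p$ values and the fan-in of the merge stays $O(m)$ rather than silently growing.
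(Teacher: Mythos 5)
Your proposal is correct and follows essentially the same route as the paper: split each convex polygon at its tangent points from $p$ into at most two cyclically sorted chains (Lemma \ref{lpoly}), then replace the sort in step \ref{stepi} by an $O(m)$-way heap merge keyed by $a_p$, giving $O(n\log m)$ per point and $O(n^2\log m)$ overall. Your extra care about normalizing orientation and splitting chains that wrap past the angle-$0$ reference ray is a detail the paper's proof glosses over, but it does not change the argument or the bounds.
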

\begin{proof}
We modify step \ref{stepi} of the algorithm for calculating $O_p$.
Based on Lemma \ref{lpoly}, we can obtain $k$ sequences ($k \le 2m$),
all of which are sorted cyclically around $p$.  This can be done in $O(n)$
(to identify the tangent points of the polygons and to create the new sequences).
We have to merge these sorted sequences to obtain $O_p$.
This can be done in $O(n \log m)$ using a heap priority queue:
initially the first points of the sorted sequences are inserted into the heap
with time complexity $O(m \log m)$.  Then, the point $q$ with the smallest value
of $a_p (q)$ is extracted from the heap and the point with the
next smallest $a_p$ in $q$'s sequence is inserted into the heap.
This process is repeated $n$ times to obtain $O_p$.  Since extracting
the value from and inserting a new value into the heap can be done in $O(\log m)$,
$O_p$ can be constructed with time complexity $O(n \log m + m \log m)$,
which is equivalent to $O(n \log m)$, since $m < n$.
\end{proof}

One method for decomposing a set of points into convex polygons is
convex hull peeling, i.e., repeatedly extracting convex hulls from
the set (for applications and a survey, the reader may consult \cite{rufai15}).
The number of resulting convex hulls is sometimes called convex
hull peeling depth.  Common convex hull algorithms
can be used for convex hull peeling by repeatedly extracting convex hulls
in $O(mn \log n)$, if the convex hull peeling depth is $m$ (this
seems adequate for our purpose; there are
faster algorithms however \cite{chazelle85}).  This yields the
following corollary.
Note that most convex hull algorithms
can be slightly modified to allow collinear points in the boundary of
the hull.

\begin{cor}
All maximal collinear subsets of size at least three in a set of $n$
points can be identified in $O(n^2 \log m)$, where $m$ is the convex
hull peeling depth of the points.
\end{cor}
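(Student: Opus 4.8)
The plan is to combine the main enumeration algorithm of Section~\ref{smain}, upgraded by Theorem~\ref{tmerg}, with a concrete way to produce the required convex polygon decomposition, namely convex hull peeling. The corollary is essentially an instantiation: once we exhibit a decomposition into $m$ convex polygons whose $m$ is the convex hull peeling depth, Theorem~\ref{tmerg} immediately gives the $O(n^2 \log m)$ bound for the enumeration step. So the work is to verify that convex hull peeling yields a valid decomposition into convex polygons, to account for its cost, and to confirm the two costs compose correctly.

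First I would invoke convex hull peeling as the decomposition procedure: repeatedly extract the convex hull of the remaining points and remove those points, until none remain. By definition the number of hulls extracted is the peeling depth $m$, and each extracted hull is a convex polygon, so this partitions $P$ into exactly $m$ convex polygons. This is precisely the input format required by Theorem~\ref{tmerg}. I would then bound the preprocessing cost: using a standard convex hull algorithm (e.g.\ Graham scan or a gift-wrapping/incremental method running in $O(n \log n)$ per hull), extracting $m$ hulls costs $O(mn \log n)$ as noted in the excerpt. One subtlety I would flag here, following the remark in the text, is that collinear points on a hull boundary must be retained rather than discarded; standard hull algorithms can be modified to include such boundary-collinear points, which is necessary so that the decomposition genuinely covers all of $P$.

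Next I would combine the two costs. The enumeration step runs in $O(n^2 \log m)$ by Theorem~\ref{tmerg}, and the peeling preprocessing runs in $O(mn \log n)$. Since $m \le n$ we have $mn\log n \le n^2 \log n$, and because $\log n$ dominates $\log m$ this preprocessing term does not always sit cleanly below $n^2\log m$; however, the peeling cost $O(mn\log n)$ is subsumed by the $O(n^2\log m)$ enumeration cost whenever $m = O(n/\log n \cdot \log m)$, and in the regimes where the corollary is interesting (small $m$) the enumeration term dominates. I would therefore state the total as $O(n^2 \log m)$ with the understanding, consistent with the excerpt's ``this seems adequate for our purpose'' remark, that faster peeling algorithms (e.g.\ \cite{chazelle85}) remove any residual dependence so the preprocessing never dominates.

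The main obstacle, then, is not the asymptotic bookkeeping but the boundary-case correctness: ensuring the decomposition is a genuine partition of $P$ and that every maximal collinear subset survives into the merged cyclic order $O_p$. The delicate points are (i) that collinear points lying on a single hull layer are not dropped, so that the union of the $m$ polygons is all of $P$, and (ii) that Lemma~\ref{lpoly} still applies when a ``polygon'' degenerates (e.g.\ a single point or a segment in the innermost layers), so that each layer still contributes at most two cyclically sorted sequences around any query point $p$. Once these degeneracies are handled—treating trivial layers as already-sorted singleton or collinear sequences—the correctness follows directly from Theorem~\ref{tmain} applied to the merged order, and the complexity from Theorem~\ref{tmerg}, completing the corollary.
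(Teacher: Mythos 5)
Your overall route is the same as the paper's (the corollary is stated there without a separate proof, as an instantiation of Theorem \ref{tmerg} with convex hull peeling supplying the decomposition), but there is one genuine gap in your cost accounting, and it is exactly the point the paper closes with Lemma \ref{lcontra}. You write that the peeling cost $O(mn\log n)$ ``does not always sit cleanly below $n^2\log m$'' and is subsumed only ``in the regimes where the corollary is interesting,'' falling back on Chazelle's faster algorithm to cover the remaining cases. That is a conditional claim, not a proof of the stated unconditional bound. In fact the subsumption holds for \emph{all} $m < n$: Lemma \ref{lcontra} states that $m\log n = O(n\log m)$ whenever $m < n$ (because $x/\log x$ is monotonically increasing for $x \ge 3$, so $m/\log m \le n/\log n$), and multiplying by $n$ gives $mn\log n = O(n^2\log m)$ directly. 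Your intuition that ``$\log n$ dominates $\log m$'' led you to the wrong conclusion here --- the factor of $m$ versus $n$ more than compensates for the weaker logarithm, so repeated $O(n\log n)$ hull extraction is already adequate and no appeal to \cite{chazelle85} or to a restricted regime is needed. The paper's Conclusion makes this observation explicitly.

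The rest of your argument --- that peeling produces exactly $m$ convex polygons partitioning $P$, that hull algorithms must be modified to retain boundary-collinear points so the layers genuinely cover $P$, and that degenerate innermost layers (single points or segments) still satisfy Lemma \ref{lpoly} trivially --- is sound and matches the paper's remarks; correctness then follows from Theorem \ref{tmain} and the complexity of the enumeration phase from Theorem \ref{tmerg}, as you say. With the Lemma \ref{lcontra} step inserted in place of the hedge, your proof is complete.
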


Parts (a)--(d) of Figure \ref{fig:layers:delete:case3} demonstrate the steps
of the algorithm described in this section for decomposing the set
of input points into convex polygons (part (b)), splitting each
polygon into at most two cyclically sorted sequences (part (c)),
and merging these sequences to obtain $O_q$ (part (d); the
arrows show the order of processing the points in the sequences).
The points on each of the lines $L_1$, $L_2$, and $L_3$ should
be contiguous in $M_p$.

\begin{figure}[htbp]
  \centering
    \includegraphics[scale=0.75]{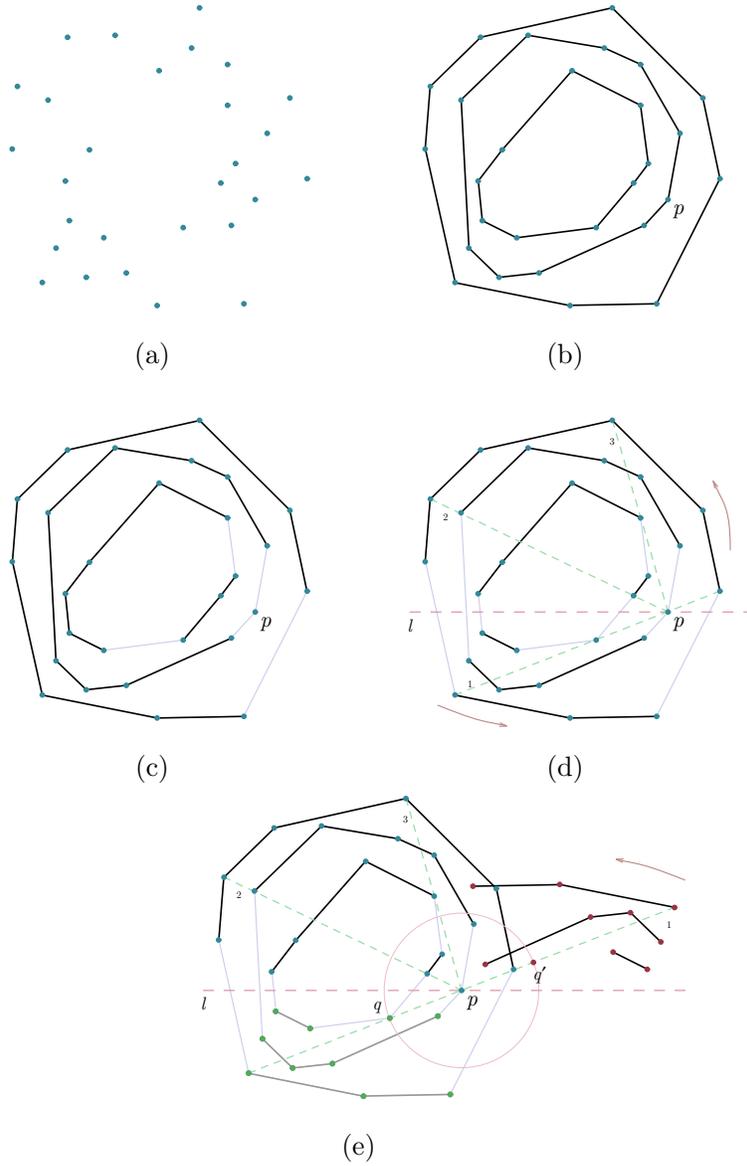}
  \caption{\textbf{(a)--(b)} An input point set $P$ is decomposed into convex layers. The algorithm is currently processing point $p$. \textbf{(c)} The convex layers are decomposed into sorted subsequences, bookended by points tangent to $p$. \textbf{(d)} Subsequences are split up and added to $U$ or $D$ depending on whether they lie above or below the horizontal line $l$ through $p$. (\textbf{e}) Points appearing in $D$ are then mapped to their antipodes and merged with $U$ to obtain $M_p$. Collinear points appear contiguous in $M_p$.  }
  \label{fig:layers:delete:case3}
\end{figure}

\section{Parallel Algorithm}
\label{sect:parallel:algo}
To obtain a parallel algorithm for finding sets of collinear points,
both the algorithm presented in Section \ref{smain} and convex
polygon decomposition should be performed in parallel, as
Theorem \ref{tpar} shows. We shall use the following lemma in the proof of Theorem \ref{tpar}.

\begin{lem}
\label{lcontra}
If $m < n$, then $m \log n = O(n \log m)$.
\end{lem}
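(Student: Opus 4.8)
The plan is to recast the claimed bound as a statement about the single-variable function $g(x) = x/\log x$ and then exploit its monotonicity. First I would observe that the ratio $\log n/\log m$ is independent of the logarithm base, so I may assume throughout that all logarithms are natural. Since $m \ge 2$ (the degenerate case $m = 1$ makes $\log m = 0$ and must be excluded, which is harmless because a single polygon requires no heap-based merge), both $\log m$ and $\log n$ are positive, and the inequality $m \log n \le C\, n \log m$ is equivalent to
\[
\frac{m}{\log m} \le C\,\frac{n}{\log n},
\qquad\text{i.e.}\qquad
g(m) \le C\, g(n),\quad g(x)=\frac{x}{\log x}.
\]
So it suffices to bound the ratio $g(m)/g(n)$ by an absolute constant whenever $2 \le m < n$.

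Next I would establish that $g$ is increasing on $[e,\infty)$. Differentiating gives $g'(x) = (\log x - 1)/(\log x)^2$, which is nonnegative precisely when $x \ge e$. This is the workhorse of the argument: on the increasing regime, $m < n$ immediately yields $g(m) < g(n)$, so no separate constant is needed there.

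I would then split into two cases according to whether $m$ already lies in the increasing regime. If $m \ge 3$, then $m, n > e$, so monotonicity gives $g(m) \le g(n)$ and the ratio is at most $1$; the constant $C = 1$ suffices. The only remaining case is $m = 2$, where $g$ has not yet entered its increasing regime. Here $n \ge 3$, so $g(n) \ge g(3)$ by monotonicity on $[3,\infty)$, whence $g(2)/g(n) \le g(2)/g(3)$, a fixed constant. Taking $C = \max\{1,\, g(2)/g(3)\}$ then handles every admissible pair and completes the proof.

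The main obstacle—minor but worth flagging—is that $g(x) = x/\log x$ is \emph{not} globally monotone: it decreases on $(1,e)$ before turning increasing, so the clean one-line comparison breaks down at the smallest admissible value $m = 2$. Isolating this single boundary value (and explicitly excluding the degenerate $m = 1$) is the only genuine care the argument requires; everything else follows at once from the sign of $g'$.
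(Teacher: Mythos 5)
Your proof is correct and rests on the same key fact as the paper's: the monotonicity of $x/\log x$ past $x = e$. You argue directly (rather than by the paper's contradiction phrasing) and are more careful at the boundary, explicitly excluding $m=1$ and handling $m=2$ separately, but the approach is essentially identical.
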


\begin{proof}
Suppose for the sake of contradiction that the converse is true, i.e.~if
$m < n$, then $m \log n > c n \log m$ for some $c \ge 1$ and $n \ge 3$.
This implies that $ m/\log m > cn/\log n$, a contradiction,
since the function $f(x) = x/\log x$ for $x \ge 3$ is
monotonically increasing.
\end{proof}

We are now ready to prove Theorem \ref{tpar}.

\begin{thm}
\label{tpar}
With $O(n)$ processors, it is possible to identify all maximal collinear
subsets of a set of $n$ points on the CREW PRAM with time complexity $O(n\log m)$
and space complexity $O(nm)$,
where $m$ is the convex hull peeling depth of the points.
\end{thm}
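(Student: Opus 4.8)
The plan is to parallelize the two ingredients of the sequential algorithm of Section~\ref{spoly} separately: the convex hull peeling that produces the decomposition into $m$ convex polygons, and the per-point computation of $M_p$ together with the detection and reporting of collinear runs. I would first establish the time and space bounds for each ingredient, then combine them and check that only concurrent reads (and no concurrent writes) are required, so that the CREW PRAM suffices.

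For the decomposition, I would peel the $m$ convex layers one after another, computing each convex hull with a standard parallel convex-hull algorithm that runs in $O(\log n)$ time using $O(n)$ processors on the CREW PRAM (modified, as noted after the Corollary, to keep collinear boundary points). Peeling the $m$ layers sequentially costs $O(m\log n)$ time, and this is exactly where Lemma~\ref{lcontra} enters: since $m<n$, we have $m\log n = O(n\log m)$, so the decomposition stays within the target time bound, using $O(n)$ shared space to store the layers.

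For the per-point work I would assign one processor to each of the $n$ points and have the processor responsible for $p$ carry out, sequentially, exactly the computation of Theorem~\ref{tmerg}. It first locates the tangent points of $p$ to each of the $m$ polygons and forms the at most $2m$ sorted subsequences guaranteed by Lemma~\ref{lpoly}, splitting any subsequence that crosses the horizontal line through $p$ so that each resulting subsequence lies entirely in $U$ or in $D$ and is therefore sorted by $\bar a_p$; this takes $O(n)$ time. It then heap-merges these $O(m)$ subsequences keyed by $\bar a_p$, extracting the elements of $M_p$ one at a time in $O(n\log m)$ time, and on the fly it maintains the current run of equal-$\bar a_p$ points, reporting a maximal run of size at least three precisely when $p$ is its $\sigma$-minimum, which it detects by comparing $\sigma$-ranks as points stream out. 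The key point for the space bound is that the heap has size $O(m)$ and the streaming detection needs no stored copy of $M_p$, so each processor uses only $O(m)$ working space; over $n$ processors this gives $O(nm)$, plus the $O(n)$ shared decomposition. All $n$ processors run their $O(n\log m)$ computations simultaneously, so the per-point phase finishes in $O(n\log m)$ time.

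Finally I would verify the model and the overall counts. Every processor only reads the shared polygon decomposition and the $\sigma$-ranks and writes to its own heap and output region, so no cell is written concurrently and the CREW PRAM is adequate. Correctness carries over from Theorem~\ref{tmain}, since the parallel version reports exactly the runs the sequential algorithm would, each maximal set being reported once by the processor of its first point in $\sigma$. Adding the two phases, the total time is $O(n\log m)+O(m\log n)=O(n\log m)$ by Lemma~\ref{lcontra}, with $O(nm)$ space and $O(n)$ processors, as claimed. I expect the main obstacle to be this simultaneous bound: showing that the run detection and the step~\ref{stepiv} test can be performed while streaming $M_p$ so that the working space per processor stays $O(m)$ rather than $O(n)$, which would only yield $O(n^2)$ space; handling maximal runs without ever materializing $M_p$ is the delicate part.
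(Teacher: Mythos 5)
Your proposal matches the paper's proof essentially step for step: parallel convex hull peeling in $O(m\log n)=O(n\log m)$ time via Lemma \ref{lcontra}, one processor per point running the heap-merge of Theorem \ref{tmerg}, and the same key space-saving device of streaming points out of the $O(m)$-size heap to detect collinear runs on the fly without ever materializing $O_p$ or $M_p$, giving $O(m)$ working space per processor and $O(nm)$ overall. The argument is correct and takes the same route as the paper.
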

\begin{proof}
The parallel algorithm first decomposes the input points into convex polygons
and then enumerates collinear subsets of the points.  We discuss these two
steps separately as follows.

For decomposing points into convex polygons, we use the parallel
algorithm presented by Akl \cite{akl84}: it computes the convex hull
of a set of $n$ points in $O(\log n)$ time with $O(n)$ processors.
Akl's algorithm assumes input points to be sorted by their $x$-coordinates;
this can be done on the CREW PRAM in $O(\log n)$ with $O(n)$
processors \cite{cole88}.
The convex hull algorithm
should be repeated $m$ times, yielding the time complexity of $O(m\log n)$.
Thus, the total time complexity of the algorithm is $O(n \log m + m \log n)$,
which is equivalent to $O(n \log m)$.
This follows trivially if $m = \Theta (n)$.
However, if $m = o(n)$ it follows from Lemma \ref{lcontra}.

Given that there is no dependency between different iterations of
the algorithm presented in Section \ref{smain},
it can be parallelized by distributing the points among the processors,
each of which requires $O(n\log m)$ to finish its task with space
complexity $O(n)$.
Therefore, it remains to improve the overall space
complexity of the parallel algorithm to $O(nm)$.

The $O(n)$ space complexity per processor is due to storing the
sorted and merged sequences of points ($O_p$ and $M_p$, as defined in Section \ref{smain}),
storing sequences of points obtained after splitting convex polygons,
and storing consecutive collinear points in step \ref{stepiii} of the algorithm
presented in Section \ref{smain}.  We reduce the space complexity of
each processor to $O(m)$ (plus $O(n)$ for storing convex polygons,
which is shared among the processors).
Instead of storing all elements of a subsequence of a sequence
of points, we store two pointers to indicate its start and end
positions.  Therefore, we can store the subsequences resulting from
splitting the polygons in $O(m)$ (there are at most $2m$ such subsequences).

We now modify the base algorithm not to store $O_p$ and $M_p$.
We use some of the symbols defined in Section \ref{smain}: $a_p (q)$
and $\bar{a}_p (q)$ for point $q$.
Let $S$ be the set of cyclically sorted sequences obtained by splitting
convex polygons, as explained in Theorem \ref{tmerg}.
We split each sequence $s$ in $S$ into at most two sequences to
obtain the set $S'$ (this still requires $O(m)$ words of memory,
as the resulting subsequences are also subsequences of the convex polygons):
Let $s_1$ be the subsequence of points
in $s$, for which $a_p (q) < \pi$ and $s_2$ be the rest of
the sequence.
We insert the first point in each sequence in $S'$ to a priority
heap, as described in the proof of Theorem \ref{tmerg}.
Then, instead of merging these subsequences to obtain $M_p$, we
detect collinear points (steps \ref{stepiii} and \ref{stepiv} of the base algorithm)
as we extract each point $q$ from the heap based on their $\bar{a}_p(q)$.
This is demonstrated in part (e) of Figure \ref{fig:layers:delete:case3}:
for points like $q$ below $p$, $\bar{a}_p (q) = a_p(q) - \pi$.  Therefore,
the algorithm virtually considers their antipodal points when extracting
the minimum from the heap; these antipodal points are coloured red
in the figure (for instance, $q'$ for $q$).
The arrow shows the order of inserting points into the heap for each
sequence.  The three points in $L_1$, including $p$ and $q'$, are
extracted successively from the heap, as required in step \ref{stepiv} of the
base algorithm.
Given there are $O(m)$ items in the heap at any moment,
its space complexity is also $O(m)$.
\end{proof}

\section{Conclusion}
\label{sect:conclusion}
We have presented a simple sequential algorithm for finding collinear
$\mathcal{R}^2$ points in the real RAM model that initially ran in
time $O(n^2 \log n)$ and linear space. We then improved its running
time to $O(n^2 \log m)$ by first decomposing the points into $m$ convex
layers. Decomposition into convex layers by repeatedly invoking an
optimal convex hull algorithm takes $O(mn \log n)$ time. Note that
this is also $O(n^2 \log m)$ by Lemma \ref{lcontra}. By taking
advantage of the ordering in the constituent convex polygons, we were
able to avoid the sorting step and reduce the time
complexity to $O(n^2 \log m)$ using linear space. Finally, we showed a
parallel version of the algorithm on the CREW PRAM that runs in
$O(n \log m)$ time using $O(mn)$ space and $O(n)$ processors.

While we have used the convex layers decomposition to obtain the convex
polygons used in our algorithm, we could have used any other
decomposition. Particularly, it would be interesting to explore the
space of convex decompositions that minimize $m$. We have restricted
our attention to point sets in the plane, but hope to explore
generalizations to higher dimensions in future work. In Section \ref{smain}, we had hinted at the idea of finding orderings that minimize inversions between  iterations of the base algorithm. It would be interesting to further explore this idea in a future work.   We have also
worked within the confines of the real RAM model. It would be
interesting to explore algorithms for the same problem in alternative
models of computation.

\bibliographystyle{unsrt}

\end{document}